\theoremstyle{plain}
\newtheorem{theorem}{Theorem}[section]
\begin{document}


\title{\Large{\textbf{Renormalized vacuum polarization of rotating black holes}}\footnote{Based on a talk given by H.R.C.F. at ``VII Black Holes Workshop'', Aveiro, 18-19 December 2014.}}

\author{Hugo R. C. Ferreira\footnote{Work done in collaboration with Jorma Louko.} \vspace*{2ex} \\ 
\small{School of Mathematical Sciences, University of Nottingham,} \\
\small{Nottingham NG7 2RD, United Kingdom} \\
\small{\url{pmxhrf@nottingham.ac.uk}}
}

\date{}

\maketitle

\vspace*{-3ex}


\begin{abstract}
Quantum field theory on rotating black hole spacetimes is plagued with technical difficulties. Here, we describe a general method to renormalize and compute the vacuum polarization of a quantum field in the Hartle-Hawking state on rotating black holes. We exemplify the technique with a massive scalar field on the warped AdS${}_3$ black hole solution to topologically massive gravity, a deformation of (2+1)-dimensional Einstein gravity. We use a ``quasi-Euclidean'' technique, which generalizes the Euclidean techniques used for static spacetimes, and we subtract the divergences by matching to a sum over mode solutions on Minkowski spacetime. This allows us, for the first time, to have a general method to compute the renormalized vacuum polarization (and, more importantly, the renormalized stress-energy tensor), for a given quantum state, on a rotating black hole, such as the physically relevant case of the Kerr black hole in four dimensions.
\end{abstract}




\vspace*{3ex}


\section{Introduction}

The study of quantum field theory on stationary black hole spacetimes has proven to be much more challenging than on static black hole spacetimes. In particular, it has not been possible to compute the expectation values of the renormalized stress-energy tensor for a matter field in a given quantum state on the Kerr spacetime. Apart from the technical complexities of the calculations, other difficulties include the nonexistence of generalizations of the Hartle-Hawking state and of Euclidean methods used in static spacetimes.

Recently, Ref.~\cite{Ferreira:2014ina} laid down a method to compute the renormalized vacuum polarization of a quantum field in the Hartle-Hawking state on a rotating warped AdS${}_3$ black hole surrounded by a mirror with Dirichlet boundary conditions, implementing the Hadamard renormalization prescription on the complex Riemannian section of the spacetime. Here, I review this method and extend the results for a generic (2+1)-dimensional rotating black hole spacetime. This calculation can be taken as a warm-up for the computation of the more physically relevant renormalized stress-energy tensor. We anticipate that this technique can be extended to a wider class of rotating black hole spacetimes, in particular the Kerr black hole in four dimensions.

The contents of this note are as follows. In Sec.~\ref{section:scalar}-\ref{section:renormalization}, we outline the method for a generic (2+1)-dimensional stationary black hole spacetime and apply it in Sec.~\ref{section:numerical} to the case of the warped AdS${}_3$ black hole. We present our conclusions in Sec.~\ref{section:conclusions} and the proof of a theorem is left to Appendix~\ref{section:appendix}. Throughout this note we use the $(-,+,+)$ signature and units in which $\hbar = c = G = k_B = 1$.


\section{Scalar field on a rotating black hole spacetime}
\label{section:scalar}

For concreteness, in the following we will consider a generic (2+1)-dimensional stationary black hole spacetime, whose metric is of the form described below. This is the case of the warped AdS${}_3$ black hole described in Sec.~\ref{section:numerical}. However, as argued, the method should be applicable to a wide range of rotating black hole spacetimes in three and more spacetime dimensions.

In spherical coordinates $(t,r,\theta)$, the metric of a (2+1)-dimensional stationary black hole can be written as
\begin{equation}
ds^2 = - N^2(r) \, dt^2 + g_{rr}(r) \, dr^2 + g_{\theta\theta}(r) \left( d\theta + N^{\theta}(r) \, dt \right)^2 \, ,
\end{equation}
where $N(r)$ is the lapse function and $N^{\theta}(r)$ is the shift function. Let the event horizon be at $r = r_+$ and let $\Omega_{\mathcal{H}}$ be the angular velocity of the horizon with respect to the coordinate system. Denote by $\chi = \partial_t + \Omega_{\mathcal{H}} \, \partial_{\theta}$ the Killing vector field which generates the horizon and which is timelike for $r \in (r_+, r_{\mathcal{C}})$, where $r = r_{\mathcal{C}}$ is the location of the speed-of-light surface, if it exists.

For most cases of physical interest, such as the Kerr spacetime, there is not any timelike Killing vector field in all of the exterior region $r > r_+$. As a consequence, in the context of quantum field theory, there is not a well defined quantum vacuum state which is regular at the horizon and is invariant under the isometries of the spacetime \cite{Kay:1988mu}. However, a vacuum state with these properties can be defined if we restrict the spacetime by inserting a boundary $\mathcal{M}$ at constant radius $r = r_{\mathcal{M}}$, with $r_+ < r_{\mathcal{M}} < r_{\mathcal{C}}$, in which the field satisfies Dirichlet boundary conditions, as in this region $\chi$ is a timelike Killing vector field.

For convenience, we change to ``corotating coordinates'' $(\tilde{t}=t, \, r, \,\tilde{\theta} = \theta - \Omega_{\mathcal{H}} t)$, such that $\chi$ is given by $\chi = \partial_{\tilde{t}}$ and the metric is then given by
\begin{equation}
ds^2 = - N^2(r) \, d\tilde{t}^2 + g_{rr}(r) \, dr^2 + g_{\theta\theta}(r) \left( d\tilde{\theta} + \big( N^{\theta}(r) + \Omega_{\mathcal{H}} \big) d\tilde{t} \right)^2 \, .
\label{eq:metriccorotatingcoords}
\end{equation}

From now on, we only consider the portion of the exterior region from the horizon up to the mirror, which we denote by $\widetilde{\text{I}}$, in contrast with the full exterior region I. A real massive scalar field $\Phi$ on $\widetilde{\text{I}}$ obeys the Klein-Gordon equation
\begin{equation}
\left(\nabla^2 - m^2 - \xi R \right) \Phi = 0 \, ,
\label{eq:fieldequation1}
\end{equation}
where $m$ is the mass of the field, $R$ is the Ricci scalar and $\xi$ is the curvature coupling parameter. We consider mode solutions of \eqref{eq:fieldequation1} of the form
\begin{equation}
\Phi_{\tilde{\omega} k}(\tilde{t},r,\tilde{\theta}) = e^{-i \tilde{\omega} \tilde{t} + i k \tilde{\theta}} \, \phi_{\tilde{\omega} k}(r) \, ,
\label{eq:fieldansatz}
\end{equation}
where $\tilde{\omega} \in \mathbb{R}$ and $k \in \mathbb{Z}$.

Given the construction above, it is possible to show that there is a natural positive and negative frequency decomposition of the mode solutions with respect to the affine parameters on the horizon \cite{Ferreira:2014ina}. It is then possible to canonically quantize the scalar field and define a vacuum state $|H \rangle$, which is regular at the horizons and invariant under the spacetime isometries. Therefore, $|H \rangle$ has all of the properties of the ``Hartle-Hawking vacuum state'', as usually defined for static black hole spacetimes \cite{Hartle:1976tp}.

The Feynman propagator is defined as
\begin{equation}
G^{\text{F}}(x,x') := i \, \langle H | \mathscr{T} \left( \Phi(x) \Phi(x') \right) | H \rangle \, , 
\label{eq:feynmanpropagator}
\end{equation}
where $\mathscr{T}$ is the time-ordering operator and $\Phi(x)$ is interpreted as an operator-valued distribution which acts on the Fock space of the theory. The Feynman propagator is one of the Green's functions associated with the Klein-Gordon equation.

The Feynman propagator, evaluated for certain quantum states states, has a Hadamard expansion of the form\cite{Decanini:2005eg}
\begin{equation}
G^{\text{F}}(x,x') = \frac{i}{4 \sqrt{2} \pi} \left( \frac{U(x,x')}{\sqrt{\sigma(x,x') + i \epsilon}} + W(x,x') \right) \, , \qquad \epsilon \to 0 \! + \, .
\label{eq:GFHadamardexpansion}
\end{equation}
Here, $x$ and $x'$ belong to a neighborhood $N$ and are linked by a unique geodesic which lies entirely in $N$; $\sigma (x,x')$ is half of the squared geodesic distance between $x$ and $x'$; $U(x,x')$ and $W(x,x')$ are symmetric and regular biscalar functions. Note that this expansion is only valid for (2+1)-dimensional spacetimes. Spacetimes with $d>2$ dimensions have a singular term of the form $[\sigma(x,x') + i \epsilon]^{d/2-1}$ and, in the case of spacetimes with even dimension $d \geq 4$, an additional singular contribution of the form $\log[\sigma(x,x') + i \epsilon]$ is present.

A quantum state for which the short-distance singularity structure of $G^{\text{F}}$ is given by \eqref{eq:GFHadamardexpansion} (or its higher-dimensional versions) is called a ``Hadamard state''. Examples include the Minkowski state on the Minkowski spacetime and the Hartle-Hawking state on the Schwarszchild black hole \cite{Sanders:2013pka}. Even though, as noted above, there is not an analogue of the Hartle-Hawking state on a rotating black hole \cite{Kay:1988mu}, an isometry-invariant state which is regular at the horizons can be defined if a Dirichlet boundary is present, so that the quantum field cannot reach the speed-of-light surface. We still call this state the ``Hartle-Hawking state''.

Concerning the regular functions $U(x,x')$ and $W(x,x')$ in the expansion \eqref{eq:GFHadamardexpansion}, it can be shown (see e.g.~Ref.~\cite{Decanini:2005eg}) that $U(x,x')$ only depends on the geometry along the geodesics joining $x$ to $x'$, whereas $W(x,x')$ contains the quantum state dependence of the Feynman propagator. Therefore, the singular, state-independent part of the Feynman propagator, the ``Hadamard singular part'', is
\begin{equation}
G_{\text{Had}}(x,x') := \frac{i}{4\sqrt{2}\pi} \frac{U(x,x')}{\sqrt{\sigma(x,x')+i\epsilon}} \, .
\end{equation}

The renormalized vacuum polarization $\langle \Phi^2(x) \rangle$ in any Hadamard state is then defined to be
\begin{equation}
\langle \Phi^2(x) \rangle := - i \lim_{x' \to x} \left[ G^{\text{F}}(x,x') - G_{\text{Had}}(x,x') \right] \, .
\label{eq:phisquareddef}
\end{equation}

At this stage, the Euclidean techniques used for static spacetimes to more easily compute the Feynman propagator are not available for stationary, but not static, spacetimes. In the next section, we describe a ``quasi-Euclidean'' method to achieve the same result.


\section{Quasi-Euclidean method}
\label{section:quasi-euclidean}

Euclidean methods can be a powerful tool to do calculations in quantum field theory on static spacetimes. A static spacetime can be thought of as a real Lorentzian section of a complex manifold, for which it is always possible to find a real Riemannian (or ``Euclidean'') section by performing an appropriate analytical continuation. For a (2+1)-dimensional static spacetime whose metric in coordinates $(t,r,\theta)$ is
\begin{equation}
ds^2 = - N^2(r) \, dt^2 + g_{rr}(r) \, dr^2 + g_{\theta\theta}(r) \, d\theta^2 \, ,
\end{equation}
where $t$ is a global timelike coordinate, one can obtain the real Riemannian section by performing a Wick rotation $t \to -i \tau$, $\tau \in \mathbb{R}$,
\begin{equation}
ds^2_{\mathbb{R}} = N^2(r) \, d\tau^2 + g_{rr}(r) \, dr^2 + g_{\theta\theta}(r) \, d\theta^2 \, .
\end{equation}

This analytic continuation procedure does not easily generalize to spacetimes that are stationary but not static. For instance, for the exterior of a rotating black hole, there might not be a globally defined timelike vector field or an analytic continuation in the coordinates that results in a real section with a positive definite metric. Both of these issues are present in Kerr\cite{Woodhouse:1977-complex}. 

If we only consider region~$\widetilde{\text{I}}$ of the (2+1)-dimensional rotating black hole spacetime, there exists an everywhere timelike Killing vector field, $\chi = \partial_{\tilde{t}}$. If we now perform a Wick rotation $\tilde{t} = -i \tau$, with $\tau \in \mathbb{R}$, the metric \eqref{eq:metriccorotatingcoords} becomes
\begin{equation}
ds^2_{\mathbb{C}} = N^2(r) \, d\tau^2 + g_{rr}(r) \, dr^2 + g_{\theta\theta} \left( d\tilde{\theta} - i \, \big( N^{\theta}(r) + \Omega_{\mathcal{H}} \big) d\tau \right)^2 \, .
\label{eq:metriccomplexRiemanniansection}
\end{equation}
This is the complex-valued metric $g^{\mathbb{C}}$ of the ``complex Riemannian'' (or ``quasi-Euclidean'') section $I^{\mathbb{C}}$ of a complex manifold, in which region $\widetilde{\text{I}}$ is a real Lorentzian section \cite{Gibbons:1976ue,Frolov:1982pi,Brown:1990di,Moretti:1999fb}. This metric is regular at the horizon if $\tau$ is periodic with period $2\pi/\kappa_+$, where $\kappa_+$ is the surface gravity of the black hole.

The Green's function $G$ associated with the Klein-Gordon equation in the complex Riemannian section satisfies the distributional equation
\begin{equation}
\left( \nabla^2 - m^2 \right) G(x,x') = - \frac{\delta^3(x,x')}{\sqrt{g(x)}} = - \frac{\delta(\tau-\tau') \delta(r-r') \delta(\tilde{\theta}-\tilde{\theta}')}{\sqrt{g(x)}} \, ,
\label{eq:GCfunctioneq}
\end{equation}
where $g(x) := |\det(g^{\mathbb{C}}_{\mu\nu})|$ and $\nabla^2 := (g^{\mathbb{C}})^{\mu\nu} \nabla_{\mu} \nabla_{\nu}$.

In contrast to the real Lorentzian section, there is a unique solution to this equation in the complex Riemannian section which satisfies the following boundary conditions: (i) $G(x,x')$ is regular at $r = r_+$, and (ii) $G(x,x')$ satisfies the Dirichlet boundary conditions at $r = r_{\mathcal{M}}$. This is because two of the directions of the complex spacetime are periodic, while the third direction is compact. In contrast, on static spacetimes without any boundary (and suitable asymptotic properties at infinity), the Euclidean section has a unique Euclidean Green's function, due to the ellipticity of the Klein-Gordon operator.

Given the periodicity conditions of $\tau$ and $\tilde{\theta}$, one has
\begin{align}
\delta (\tau - \tau') &= \frac{\kappa_+}{2\pi} \sum_{n = -\infty}^{\infty} e^{i \kappa_+ n (\tau - \tau')} \, ,
\label{eq:deltatau} \\
\delta (\tilde{\theta} - \tilde{\theta}') &= \frac{1}{2\pi} \sum_{k = -\infty}^{\infty} e^{i k(\tilde{\theta} - \tilde{\theta}')} \, ,
\label{eq:deltatheta}
\end{align}
understood as distributional identities. We can than find the unique solution of Eq.~\eqref{eq:GCfunctioneq} and write it as a mode sum,
\begin{equation}
G(x,x') = \frac{\kappa_+}{4\pi^2} \sum_{n = -\infty}^{\infty} e^{i \kappa_+ n (\tau - \tau')} \sum_{k = -\infty}^{\infty} e^{i k(\tilde{\theta} - \tilde{\theta}')} \, G_{nk}(r,r')
=: \sum_{n,k} G^{\text{BH}}_{nk}(x,x') \, .
\label{eq:Greenfunction0}
\end{equation}

In practice, the radial part of the Green's function $G_{nk}(r,r')$ is found by solving the radial field equation, using standard Green's function techniques. In 2+1 dimensions it is generally possible to write it in closed form, whereas we need to resort to numerical methods for four or more dimensions.


\section{Hadamard renormalization}
\label{section:renormalization}

As with with the Feynman propagator in the real Lorentz section, the Green's function $G$ in the complex Riemannian section has an Hadamard expansion and, in particular, we can write its Hadamard singular part as
\begin{equation}
G_{\text{Had}}(x,x') = \frac{1}{4\sqrt{2}\pi} \frac{U(x,x')}{\sqrt{\sigma(x,x')}} \, ,
\label{eq:Hadamardsingpart}
\end{equation}
where $\sigma(x,x')$ is the generalization of the quantity defined in the real Lorentz section (see Ref.~\cite{Ferreira:2014ina} for more details).

In an analogous way to the Lorentzian case, we now subtract the Hadamard singular part from the Green's function $G$ and obtain the vacuum polarization,
\begin{equation}
\langle \Phi^2(x) \rangle = \lim_{x' \to x} \left[ G(x,x') - G_{\text{Had}}(x,x') \right] \, .
\label{eq:vacuumpolCsection}
\end{equation}
By construction, the Green's function $G$ is regular at $r = r_+$, satisfies the Dirichlet boundary conditions at $r = r_{\mathcal{M}}$ and is invariant under the spacetime isometries. Therefore, $\langle \Phi^2(x) \rangle$ as given by \eqref{eq:vacuumpolCsection} is the vacuum polarization for a scalar field in the Hartle-Hawking state.

It remains to perform the subtraction in \eqref{eq:vacuumpolCsection} before the coincidence limit can be taken. As $G$ is known only as the mode sum \eqref{eq:Greenfunction0}, the evaluation of $\langle \Phi^2(x) \rangle$ requires $G_{\text{Had}}$ to be rewritten as a mode sum that can be combined with \eqref{eq:Greenfunction0} so that the divergences in the coincidence limit get subtracted under the sum term by term. We accomplish this by comparing $G_{\text{Had}}$ to the Hadamard singular part for a scalar field in rotating Minkowski spacetime in the complex Riemannian section, and by rewriting it as a sum over Minkowski mode solutions. 

It can be shown that the Hadamard singular part of the Green's function for a scalar field in the Minkowski vacuum can be written as 
\begin{equation}
G_{\text{Had}}^{\mathbb{M}}(x,x') = \sum_{n,k} G^{\mathbb{M}}_{nk}(x,x') + G_{\text{reg}}^{\mathbb{M}}(x,x') \, ,
\label{eq:GnkMink}
\end{equation}
where the first term on the rhs is the full Green's function and $G_{\text{reg}}^{\mathbb{M}}(x,x')$ is a term which is finite when $x' \to x$ \cite{Ferreira:2014ina}.

It is convenient at this stage to consider a particular choice of point separation. Assume that the black hole metric is given in coordinates $(\tau,r,\tilde{\theta})$, whereas the Minkowski metric is given in coordinates $(\tau,\rho,\tilde{\theta})$. Now, consider the case of angular separation in each spacetime, such that for the black hole case $x = (\tau, r, 0)$ and $x' = (\tau, r, \tilde{\theta})$, with $\tilde{\theta} > 0$, and similarly for the Minkowski case.

The expansion of the Hadamard singular parts for small $\tilde{\theta}$ are
\begin{align}
G_{\text{Had}}(x,x') &= \frac{1}{4\pi} \frac{1}{\sqrt{g_{\tilde{\theta}\tilde{\theta}}(r)}} \, \frac{1}{\tilde{\theta}} + \mathcal{O}(\tilde{\theta}) \, , \\
G_{\text{Had}}^{\mathbb{M}}(x,x') &= \frac{1}{4\pi} \frac{1}{\sqrt{g^{\mathbb{M}}_{\tilde{\theta}\tilde{\theta}}(\rho)}} \, \frac{1}{\tilde{\theta}} + \mathcal{O}(\tilde{\theta}) \, ,
\end{align}
where $g^{\mathbb{M}}_{\tilde{\theta}\tilde{\theta}}(\rho) = \rho^2$ is the $\tilde{\theta}\tilde{\theta}$-component of the metric for the rotating Minkowski spacetime. 
We are free to identify
\begin{equation}
g^{\mathbb{M}}_{\tilde{\theta}\tilde{\theta}}(\rho) \equiv \gamma^{-2}(r) \, g_{\tilde{\theta}\tilde{\theta}}(r) \, ,
\end{equation}
where $\gamma(r) > 0$ is a function to be specified. This identification provides a matching between the two radial coordinates, $\rho = \rho(r) = \gamma^{-1}(r) \sqrt{g_{\tilde{\theta}\tilde{\theta}}(r)}$.

Given this identification, we can now write
\begin{align}
G(x,x') - G_{\text{Had}}(x,x') = \sum_{n,k} \left[ G^{\text{BH}}_{nk}(x,x') - \gamma^{-1} G^{\mathbb{M}}_{nk}(x,x') \right] - \gamma^{-1} G_{\text{reg}}^{\mathbb{M}}(x,x') + \mathcal{O}(\tilde{\theta}) \, .
\label{eq:Gdiff}
\end{align}

The Minkowski Green's function has several free parameters: $\rho$ (radial coordinate), $T_{\mathbb{M}}$ (temperature of the scalar field), $\Omega_{\mathbb{M}}$ (angular velocity of the coordinate system) and $m_{\mathbb{M}}^2$ (squared mass of the scalar field), besides the unspecified factor $\gamma$ we introduced above. These can be chosen such that the double sum in \eqref{eq:Gdiff} is convergent when $\tilde{\theta} \to 0$.

The main result of this section is the following:
\begin{theorem} \label{theorem}
If the parameters $\gamma$, $T_{\mathbb{M}}$ and $\Omega_{\mathbb{M}}$ are chosen as
\begin{equation}
\gamma(r) = N(r) \, , \qquad T_{\mathbb{M}} = \frac{\kappa_+}{2\pi} \, , \qquad \Omega_{\mathbb{M}} = N^{\theta}(r) + \Omega_{\mathcal{H}} \, ,
\label{eq:matching}
\end{equation}
then the double sum in \eqref{eq:Gdiff} is finite in the coincidence limit.
\end{theorem}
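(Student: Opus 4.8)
The plan is to trace every coincidence-limit divergence back to the large-mode-number tail of the double sum, and to show that \eqref{eq:matching} is exactly the choice that makes the black-hole and Minkowski summands agree asymptotically. Away from coincidence ($\tilde\theta>0$) both $G$ and $\sum_{n,k}G^{\mathbb M}_{nk}$ are finite, so the sum in \eqref{eq:Gdiff} there equals $G-\gamma^{-1}\sum_{n,k}G^{\mathbb M}_{nk}$; interchanging the coincidence limit with the sum is then permissible, and yields a finite answer, precisely when the difference of summands at coincident radial points, $G^{\text{BH}}_{nk}(r,r)-\gamma^{-1}G^{\mathbb M}_{nk}(\rho(r),\rho(r))$, is summable over $(n,k)\in\mathbb Z^2$. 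The whole statement thus reduces to controlling the large-$|(n,k)|$ behaviour of the radial Green's functions.

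First I would substitute the ansatz $e^{i\kappa_+ n\tau+ik\tilde\theta}\phi_{nk}(r)$ into \eqref{eq:GCfunctioneq} with the metric \eqref{eq:metriccomplexRiemanniansection} and read off the radial operator; using the inverse of \eqref{eq:metriccomplexRiemanniansection}, the part of the coefficient of $\phi_{nk}$ that grows with the mode numbers is the complex potential
\begin{equation}
V_{nk}(r)=\frac{\big(\kappa_+ n+i(N^\theta(r)+\Omega_{\mathcal H})k\big)^2}{N^2(r)}+\frac{k^2}{g_{\theta\theta}(r)}+m^2 ,
\end{equation}
the curvature coupling, the first-order term and the radial derivatives of the metric functions all being of lower order in $|(n,k)|$. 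Repeating this for a scalar field of temperature $T_{\mathbb M}$, angular velocity $\Omega_{\mathbb M}$ and squared mass $m_{\mathbb M}^2$ on rotating Minkowski space --- whose quasi-Euclidean metric is \eqref{eq:metriccomplexRiemanniansection} with $N\to1$, $N^\theta+\Omega_{\mathcal H}\to\Omega_{\mathbb M}$, $\kappa_+\to2\pi T_{\mathbb M}$ and $g_{\theta\theta}\to g^{\mathbb M}_{\tilde\theta\tilde\theta}(\rho)=\rho^2$ --- gives $V^{\mathbb M}_{nk}(\rho)=(2\pi T_{\mathbb M}n+i\Omega_{\mathbb M}k)^2+k^2/\rho^2+m_{\mathbb M}^2$. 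The leading term of the uniform (Liouville--Green) asymptotics of $G_{nk}(r,r)$ is fixed entirely by $V_{nk}$ and the measure, and likewise for $G^{\mathbb M}_{nk}$. Imposing the identification $g^{\mathbb M}_{\tilde\theta\tilde\theta}(\rho)\equiv\gamma^{-2}g_{\theta\theta}(r)$ and then demanding that these leading behaviours cancel against the $\gamma^{-1}$ in \eqref{eq:Gdiff} leaves no freedom: one gets $V^{\mathbb M}_{nk}=N^2V_{nk}+(m_{\mathbb M}^2-N^2m^2)$ exactly when $\gamma=N$, $2\pi T_{\mathbb M}=\kappa_+$ and $\Omega_{\mathbb M}=N^\theta+\Omega_{\mathcal H}$, which is \eqref{eq:matching}. (The value of $m_{\mathbb M}^2$ affects only the $(n,k)$-independent remainder $m_{\mathbb M}^2-N^2m^2$, which enters $G_{nk}$ two orders down, consistent with its absence from \eqref{eq:matching}.) With \eqref{eq:matching} the leading, $\mathcal{O}(|(n,k)|^{-1})$, parts of $G^{\text{BH}}_{nk}$ and $\gamma^{-1}G^{\mathbb M}_{nk}$ cancel.

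The remaining and, I expect, decisive step is to show that what is left of the difference --- the subleading Liouville--Green terms of the two expansions, which are \emph{not} related by a pointwise algebraic identity, together with the $m_{\mathbb M}^2-N^2m^2$ remainder --- is small enough that $\sum_{n,k}\big[G^{\text{BH}}_{nk}(r,r)-\gamma^{-1}G^{\mathbb M}_{nk}(\rho,\rho)\big]$ converges, if necessary summing the (elementary, $\tau$-circle) geometric-type series over $n$ first and only then over $k$. Three points make this delicate: (i) $V_{nk}$ is complex and not sign-definite, owing to the $i(N^\theta+\Omega_{\mathcal H})k$ cross term, so the Liouville--Green expansion and its error estimates must be developed for complex, $(n,k)$-dependent potentials, with care over the branch of $\sqrt{V_{nk}}$ and over spurious turning points; (ii) the expansion must be uniform down to the horizon $r\to r_+$, where $N\to0$ and it is the periodicity $2\pi/\kappa_+$ of $\tau$ that keeps $V_{nk}$ controlled, and uniform in the direction of $(n,k)$ in the lattice, in particular near the ray $\kappa_+ n\approx-i(N^\theta+\Omega_{\mathcal H})k$; and (iii) since a difference of size $|(n,k)|^{-2}$ is already borderline for a two-dimensional lattice, one must either extract further cancellations or establish conditional convergence in the above summation order --- i.e.\ check that matching the leading symbol is genuinely sufficient rather than having to tune a whole hierarchy of matching conditions. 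Once convergence is in hand, the finite value of the sum, added to the manifestly regular $W$, $\gamma^{-1}G^{\mathbb M}_{\text{reg}}$ and $\mathcal{O}(\tilde\theta)$ pieces in \eqref{eq:Gdiff}, gives the renormalized $\langle\Phi^2\rangle$ of \eqref{eq:vacuumpolCsection}.
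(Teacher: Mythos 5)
Your proposal follows essentially the same route as the paper's proof: extract the large-$(n,k)$ asymptotics of the radial Green's function summands from the Liouville--Green form of the radial equation, observe that the leading $\mathcal{O}(\chi_{nk}^{-1})$ terms of $G^{\text{BH}}_{nk}$ and $\gamma^{-1}G^{\mathbb{M}}_{nk}$ cancel precisely for the choices \eqref{eq:matching}, and then invoke summability of the remainder. Your worry (iii) about a borderline $|(n,k)|^{-2}$ remainder is resolved in the paper by the fact that the expansion proceeds in odd inverse powers, so the next term is $\mathcal{O}(\chi_{nk}^{-3})$ and (since $\operatorname{Re}\chi_{nk}^{2}$ grows like $n^{2}+k^{2}$ inside the speed-of-light surface) absolutely summable over the lattice.
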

\begin{proof}
See Appendix~\ref{section:appendix}.
\end{proof}

This choice corresponds to have the temperature $T_{\mathbb{M}}$ of the scalar field in Minkowski to match the Hawking temperature of the black hole and to have the angular velocity $\Omega_{\mathbb{M}}$ to be equal to the one measured by a locally nonrotating observer at radius $r$ in the black hole spacetime.

The key aspect of the proof of this theorem is that, in order to remove the divergences, we only need to know the asymptotic behaviour of the Green's function summands $G^{\text{BH}}_{nk}(r,r)$ and $G^{\mathbb{M}}_{nk}(\rho,\rho)$ for large values of the quantum number $n$ and $k$, and not the full solutions. This implies that, apart from technical difficulties, this method can be applied to black holes in four or more dimensions, for which although we can only obtain the Green's functions numerically, the asymptotic expansions of the summands for large quantum numbers can be explicitly computed using the procedure described in Appendix~\ref{section:appendix} (and in Ref.~\cite{Ferreira:2014ina} in more detail).

Setting the parameters as in \eqref{eq:matching}, it is now possible to take the coincidence limit $\tilde{\theta} \to 0$ of \eqref{eq:Gdiff} and compute the renormalized vacuum polarization \eqref{eq:vacuumpolCsection}. In the next section, as an example, we present the numerical results for the particular case of the warped AdS${}_3$ black hole.


\section{Numerical results for the warped AdS${}_3$ black hole}	
\label{section:numerical}

In this section, we exemplify the method described above by numerically computing the vacuum polarization of the scalar field in the Hartle-Hawking state for a (2+1)-dimensional rotating black hole, the spacelike stretched black hole. This is one of the several types of warped AdS${}_3$ black hole solutions \cite{Anninos:2008fx} to topologically massive gravity \cite{Deser:1982vy,Deser:1981wh}, an extension of Einstein gravity in 2+1 dimensions with a propagating degree of freedom. Its metric, in coordinates $(t,r,\theta)$, is given by\footnote{We set the cosmological length $\ell = 1$.}
\begin{equation}
ds^2 = - N^2(r) dt^2 + \frac{dr^2}{4 R^2(r) N^2(r)} + R^2(r) \left( d\theta + N^{\theta}(r) dt \right)^2 \, ,
\label{eq:metricbh}
\end{equation}
with
\begin{subequations}
\begin{align}
R^2(r) &= \frac{r}{4} \left[ 3(\nu^2-1)r + (\nu^2+3)(r_+ + r_-) - 4\nu \sqrt{r_+ r_-(\nu^2+3)} \right] \, , \\
N^2(r) &= \frac{(\nu^2+3)(r-r_+)(r-r_-)}{4R^2(r)} \, , \\
N^{\theta}(r) &= \frac{2\nu r - \sqrt{r_+ r_- (\nu^2+3)}}{2 R^2(r)} \, .
\end{align}
\end{subequations}

There are outer and inner horizons at $r = r_+$ and $r = r_-$, respectively, where the coordinates $(t, r, \theta)$ become singular, and a singularity at $r=r_0$ (the largest zero of $R^2(r)$). The dimensionless coupling $\nu \in (1, \infty)$ is the warp factor, and in the limit $\nu \to 1$ the above metric reduces to the metric of the BTZ black hole\cite{Banados:1992wn,Banados:1992gq} in a rotating frame. For $\nu > 1$, there exists a speed-of-light surface is located at
\begin{equation}
r = r_{\mathcal{C}} = \frac{4 \nu^2 r_+ - (\nu^2+3) r_-}{3(\nu^2-1)} \, .
\end{equation}
We assume that a mirror $\mathcal{M}$ is present at constant radius $r = r_{\mathcal{M}} < r_{\mathcal{C}}$, at which Dirichlet boundary conditions are imposed. Moreover, the angular velocity of the outer horizon is given by
\begin{equation}
\Omega_{\mathcal{H}} = - \frac{2}{2\nu r_+ - \sqrt{r_+ r_- (\nu^2+3)}} \, .
\end{equation}
Note in particular that this spacetime does not have a static limit.

The Carter-Penrose diagram for this spacetime when $r_0 < r_- < r_+$ is shown in Fig.~\ref{fig:CPdiagrams}, which is essentially of the same form of those of asymptotically flat black holes in 3+1 dimensions. More details about this black hole solution can be found in Ref.~\cite{Ferreira:2013zta} and references therein. 

\begin{figure}[t!]
\begin{center}
\includegraphics[scale=0.5]{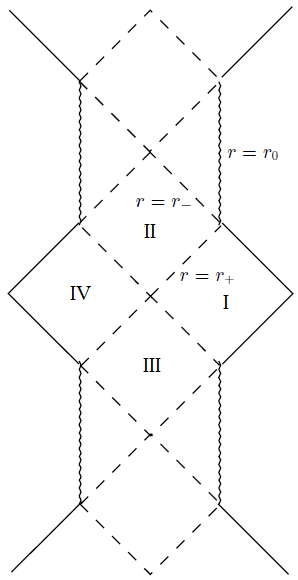} 
\hspace*{10ex}
\includegraphics[scale=0.5]{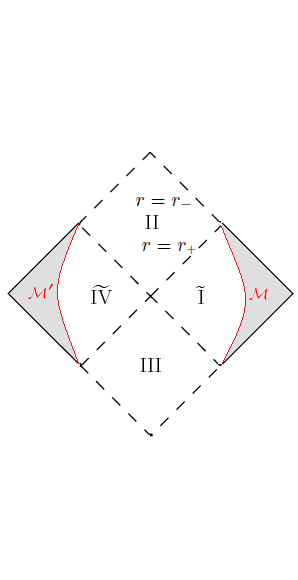}
\caption{\label{fig:CPdiagrams} Carter-Penrose diagrams of the spacelike stretched black hole spacetime for the case $r_0 < r_- < r_+$ on the left and a portion of the spacetime with the mirrors in place on the right.}
\end{center}
\end{figure}

As described in Sec.~\ref{section:scalar}, we restrict our attention only to region $\widetilde{\text{I}}$ of the spacetime (see Fig.~\ref{fig:CPdiagrams}). In this case, the vacuum polarization of the scalar field in the Hartle-Hawking state is given by
\begin{align}
\langle \Phi^2 (x) \rangle &= \sum_{k=-\infty}^{\infty} \sum_{n=-\infty}^{\infty} \left[ G^{\text{BH}}_{nk}(r,r) - \frac{1}{N(r)} \, G^{\mathbb{M}}_{nk}\left(\tfrac{R(r)}{N(r)},\tfrac{R(r)}{N(r)}\right)\Big|_{\Omega_{\mathbb{M}} = N^{\theta}(r)+\Omega_{\mathcal{H}}} \right] \notag \\ 
&\quad + \frac{1}{4\pi N(r)} \left[ {-m_{\mathbb{M}}} + \sum_{N \neq 0} \frac{e^{-m_{\mathbb{M}} \sqrt{\left(\frac{N}{T_{\mathbb{M}}}\right)^2 - \tfrac{4 R^2(r)}{N^2(r)} \sinh^2 \left( \frac{\Omega_{\mathbb{M}} N}{2T_{\mathbb{M}}} \right) + i \epsilon \, \text{sgn}(\Omega_{\mathbb{M}} N)}}}{\sqrt{\left(\frac{N}{T_{\mathbb{M}}}\right)^2 - \tfrac{4 R^2(r)}{N^2(r)} \sinh^2 \! \left( \frac{\Omega_{\mathbb{M}} N}{2T_{\mathbb{M}}} \right) + i \epsilon \, \text{sgn}(\Omega_{\mathbb{M}} N)}} \right] \, ,
\label{eq:vacuumpolcalculation}
\end{align}
with $\epsilon \to 0+$ indicating the choice of branch of the square root \cite{Ferreira:2014ina}.

The numerical results for selected values of the parameters are presented in Fig.~\ref{fig:numerics1}. In the plot, $\langle \Phi^2(x) \rangle$ is shown as a function of the normalized radial coordinate $z/z_{\mathcal{M}}$, where $z = (r-r_+)/(r-r_-)$. The plot is very similar to the one obtained in Ref.~\cite{Duffy:2002ss} for a scalar field in the (3+1)-dimensional Minkowski spacetime surrounded by a mirror with Dirichlet boundary conditions (note that ``rotating Minkowski spacetime'' is related to ``static Minkowski spacetime'' by a coordinate transformation, hence the results for $\langle \Phi^2(x) \rangle$ are the same for both cases).

Furthermore, note that $\langle \Phi^2(x) \rangle$ gets arbitrarily large and negative as the mirror is approached. This is to be expected, as we imposed that the Green's function $G(x,x')$ must vanish, even when $x' \to x$, whereas the subtraction term still diverges when $x' \to x$ (see chapter 4.3 of \cite{birrell1984quantum} for more details). 

%
\begin{figure}[ht!]
\begin{center}
\includegraphics[scale=1]{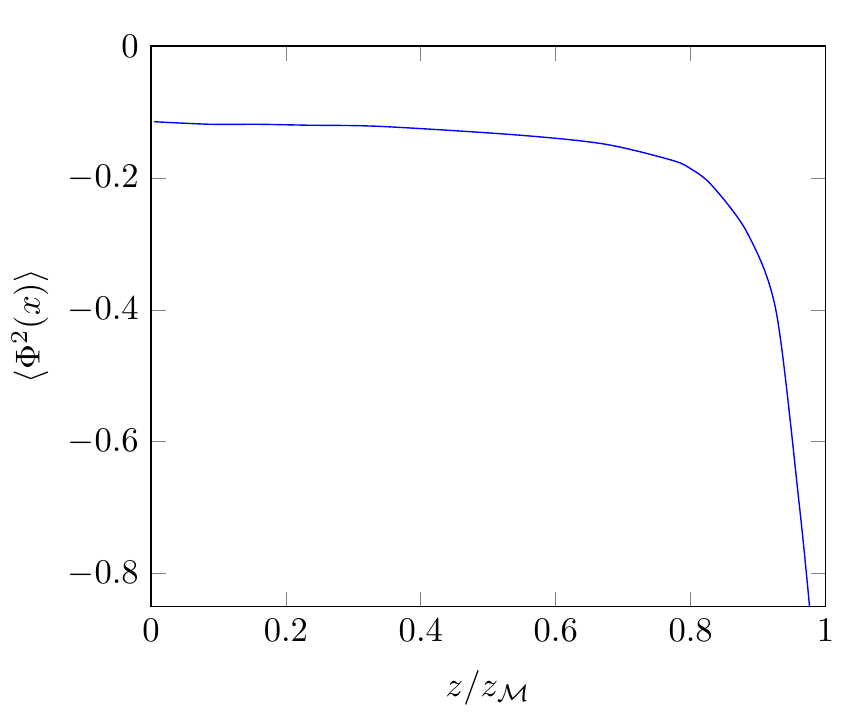} 
\caption{\label{fig:numerics1} Vacuum polarization for the scalar field as a function of the normalized radial coordinate $z/z_{\mathcal{M}}$, for $\nu = 1.2$, $r_+ = 15$, $r_- = 1$, $r_{\mathcal{M}} = 62$ and $m = 1$.}
\end{center}
\end{figure}
%


\section{Conclusions}	
\label{section:conclusions}

In this note, we have described a method to compute the vacuum polarization for a quantum field in the Hartle-Hawking state on a general rotating black hole surrounded by a mirror. We have employed a ``quasi-Euclidean'' technique to obtain the complex Riemannian section of the original spacetime, in which the unique Green's function associated with the Klein-Gordon equation can be found. This Green's function is given as a mode sum and its singular behavior in the coincidence limit can be subtracted by a sum over Minkowski modes with the same singularity structure. Taking the coincidence limit of this subtraction gives precisely the renormalized vacuum polarization. We exemplified the technique with a massive scalar field on a (2+1)-dimensional rotating spacelike stretched black hole.

A key ingredient in our implementation of the Hadamard renormalization was to match the mode sum for the Green's function in the complex Riemannian section of the black hole to a mode sum in the complex Riemannian section of a rotating Minkowski spacetime. We anticipate that, despite the technical complexity of the calculations, this method can be applied to the Kerr spacetime in four dimensions. In this case, the relevant mode solutions to the Klein-Gordon equation on the complex Riemannian section would need to be obtained numerically, but the asymptotic properties of the solutions for large values of the quantum numbers should be within analytic reach, and it is only these asymptotic properties that are required in the matching to the Minkowski mode solutions. It should prove interesting to attempt the implementation of our method for Kerr in practice.


\section*{Acknowledgments}

I thank Jorma Louko, who was a collaborator in this project, for reading a draft of this note. I am grateful to the organizers of the ``VII Black Holes Workshop'' for the hospitality during the workshop. I acknowledge financial support from Funda\c{c}\~{a}o para a Ci\^{e}ncia e Tecnologia (FCT)-Portugal through Grant No.\ SFRH/BD/69178/2010.


\appendix

\section[Proof of Theorem 4.1]{Proof of Theorem \ref{theorem}}
\label{section:appendix}

The proof below is a generalization of the proof given in Ref.~\cite{Ferreira:2014ina} for the case of the warped AdS${}_3$ black hole discussed in Sec.~\ref{section:numerical}.
 
In order to check the convergence of the double sum \eqref{eq:Gdiff} in the coincidence limit, we obtain the asymptotic behaviour of the summand for large values of the quantum numbers $n$ and $k$.
  
For a black hole spacetime with metric \eqref{eq:metriccomplexRiemanniansection} in the complex Riemannian section, the Klein-Gordon equation
\begin{equation}
\left(\nabla^2 - m^2 \right) \Phi(\tau, r, \tilde{\theta}) = 0 \, ,
\label{eq:KGeqCsection}
\end{equation}
together with the ansatz $\Phi_{n k}(\tau,r,\tilde{\theta}) = e^{i \kappa_+ n \tau + i k \tilde{\theta}} \, \phi_{n k}(r)$, lead to
\begin{equation}
\left[ \frac{1}{\sqrt{g}} \frac{d}{dr} \left( \sqrt{g} \, g^{rr} \frac{d}{dr} \right) - \frac{\left( \kappa_+ n + i k \big( N^{\theta} + \Omega_{\mathcal{H}} \big) \right)^2}{N^2} - \frac{k^2}{g_{\tilde{\theta}\tilde{\theta}}} - m^2 \right] \phi_{n k} = 0 \, .
\label{eq:radialfieldeq}
\end{equation}

Define a new radial coordinate $\xi$ such that the \eqref{eq:radialfieldeq} can be written in the form
\begin{equation}
\frac{d^2 \phi_{n k}(\xi)}{d\xi^2} - \left( \chi_{nk}^2 (\xi) + \eta^2 (\xi) \right) \phi_{n k}(\xi) = 0 \, ,
\end{equation}
where $\chi_{nk}^2 (\xi)$ contains all the $n$ and $k$ dependence and is large whenever $n^2+k^2$ is large. From \eqref{eq:radialfieldeq} we obtain
\begin{equation}
\frac{d}{d\xi} = \sqrt{g} \, g^{rr} \frac{d}{dr}
\end{equation}
and
\begin{equation}
\chi_{nk}^2 = g_{\tilde{\theta}\tilde{\theta}} \left( \kappa_+ n + i k \big( N^{\theta} + \Omega_{\mathcal{H}} \big) \right)^2 + N^2 k^2 \, , \qquad
\eta^2 = g_{\tilde{\theta}\tilde{\theta}} N^2 m^2 \, .
\end{equation}

Using the results of Appendix D of Ref.~\cite{Ferreira:2014ina}, the summand $G^{\text{BH}}_{nk}(x,x)$ of the Green's function $G(x,x')$ in \eqref{eq:Greenfunction0} has the following asymptotic expansion for large $\chi_{nk}$
\begin{equation}
G^{\text{BH}}_{nk}(x,x') = \frac{\kappa_+}{4\pi^2} \frac{1}{2\chi_{nk}} + \mathcal{O}\left(\chi_{nk}^{-3}\right) \, .
\end{equation}
Analogously, for the Minkowski Green's function, the summand $G^{\mathbb{M}}_{nk}(x,x)$ in \eqref{eq:GnkMink} has the asymptotic expansion
\begin{equation}
G^{\mathbb{M}}_{nk}(x,x') = \frac{T_{\mathbb{M}}}{2\pi} \frac{1}{2\chi^{\mathbb{M}}_{nk}} + \mathcal{O}\left((\chi^{\mathbb{M}}_{nk})^{-3}\right) \, ,
\end{equation}
where
\begin{equation}
\left(\chi^{\mathbb{M}}_{nk}\right)^2(\rho) = \rho^2 \left( 2\pi T_{\mathbb{M}} n + i k \Omega_{\mathbb{M}} \right)^2 + k^2 \, .
\end{equation}

The double sum in \eqref{eq:Gdiff} will be finite in the coincidence limit if the leading term in the asymptotic expansion of the summand vanishes, that is, if the term of order $\chi_{nk}^{-1}$ of the expansion of $G^{\text{BH}}_{nk}(x,x)$ cancels with the term of order $\left(\chi^{\mathbb{M}}_{nk}\right)^{-1}$ of the expansion of $\gamma^{-1}(r) \, G^{\mathbb{M}}_{nk}(x,x)$. This only occurs if the free parameters $\gamma$, $T_{\mathbb{M}}$ and $\Omega_{\mathbb{M}}$ are chosen as
\begin{equation}
\gamma(r) = N(r) \, , \qquad T_{\mathbb{M}} = \frac{\kappa_+}{2\pi} \, , \qquad \Omega_{\mathbb{M}} = N^{\theta}(r) + \Omega_{\mathcal{H}} \, .
\end{equation}

To show that the double sum is indeed finite in the coincidence limit, we need to check that the double sum of the remaining terms in the asymptotic expansion of the summand, which are $\mathcal{O}\left(\chi_{nk}^{-3}\right)$, is finite. This can be done in the exact same way as in the proof of Theorem III.1 of Ref.~\cite{Ferreira:2014ina}.


\bibliographystyle{ws-ijmpd}
\bibliography{sample}

\end{document}